\DeclareMathOperator*{\minimize}{minimize} 
\DeclareMathOperator*{\subjectto}{s.\hspace{3pt} t.:\hspace{3pt}} 
\definecolor{infocolor}{RGB}{213,229,255}
\definecolor{inteins}{RGB}{128,179,255}
\definecolor{color1}{RGB}{199,209,232}
\definecolor{color2}{RGB}{230,231,233}
\newtheorem{theorem}{Theorem}
\newtheorem{lemma}[theorem]{Lemma}
\begin{document}

	\title{Terahertz-Band Direction Finding With Beam-Split and Mutual Coupling Calibration }

	\author{\IEEEauthorblockN{Ahmet M. Elbir, \textit{Senior Member, IEEE},
			Kumar Vijay Mishra,
			\textit{Senior Member, IEEE},  \\
			and		Symeon Chatzinotas, \textit{Fellow, IEEE}
		}
		\thanks{This work was supported in part by the Horizon Project TERRAMETA.}
		\thanks{A. M. Elbir is with the  University of Luxembourg, Luxembourg; and Duzce University, Turkey (e-mail: ahmetmelbir@ieee.org).} 
		\thanks{K. V. Mishra is with the United States Army Research Laboratory, Adelphi, MD 20783 USA; and the University of Luxembourg, Luxembourg  (e-mail: kvm@ieee.org).}
		\thanks{S. Chatzinotas is with the University of Luxembourg, Luxembourg (e-mail: symeon.chatzinotas@uni.lu). }

	}
	\maketitle
	
	\begin{abstract}
		
		Terahertz (THz) band is currently envisioned as the key building block to achieving the future sixth generation (6G) wireless systems. The ultra-wide bandwidth and very narrow beamwidth of THz systems offer the next order of magnitude in user densities and multi-functional behavior. However, wide bandwidth results in a frequency-dependent beampattern causing the beams generated at different subcarriers split and point to different directions. Furthermore, mutual coupling degrades the system's performance. This paper studies the compensation of both beam-split and mutual coupling for direction-of-arrival (DoA) estimation by modeling the beam-split and mutual coupling as an array imperfection. We propose a subspace-based approach using multiple signal classification with CalibRated for bEAam-split and Mutual coupling (CREAM-MUSIC) algorithm for this purpose. Via numerical simulations, we show  the proposed CREAM-MUSIC approach accurately estimates the DoAs in the presence of beam-split and mutual coupling.

	\end{abstract}
	\begin{IEEEkeywords}
		Array calibration, beam split, DoA estimation, mutual coupling, subspace-based harmonic retrieval, Terahertz.
	\end{IEEEkeywords}
	%
	
	
	%
	
	\vspace{-12pt}
	\section{Introduction}
	\label{sec:Introduciton}
	Terahertz (THz) band ($0.1$-$10$ THz) is widely viewed as a technology to enable significant performance enhancements in sixth-generation (6G) wireless networks~\cite{thz_Akyildiz2022May}. An accuracy of milli-degree in direction-of-arrival (DoA) estimation  is critical to guaranteeing the reliability of THz sensing as well as communications~\cite{milliDegree_doa_THz_Chen2021Aug,elbir2022Aug_THz_ISAC}. However, significant challenges remain in realizing this goal at the THz band owing to high path losses, complex propagation/scattering phenomena, and the use of extremely large arrays~\cite{ummimoTareqOverview,thz_Akyildiz2022May,thz_metasurface_Venkatesh2022Jun}. 
	
	In particular, the ultra-dense integration of the antennas in the THz array is accompanied by mutual coupling (MC) which leads to degradation in the direction finding accuracy \cite{milliDegree_doa_THz_Chen2021Aug,thz_DOA_Nayir2022Oct,mutualCoupling_THZ_Zhang2019Mar}.	 
	The existing works on reducing MC in THz mostly involve array design with new metamaterials~\cite{mutualCoupling_THz2_Jafargholi2018Oct}, graphene monolayers~\cite{mutualCoupling_THz_graphene_Moreno2015Dec}, THz metasurfaces~\cite{thz_metasurfaceAlibakhshikenari2018} or frequency-selective graphene surfaces~\cite{mutualCoupling_THZ_Zhang2019Mar}, while MC calibration at THz-band via signal processing remains relatively unexamined. 
	Furthermore, THz arrays also suffer from \textit{beam-split} arising from the subcarrier-independent (SI) analog beamformers (ABs)~\cite{thz_beamSplit,elbir_THZ_CE_ArrayPerturbation_Elbir2022Aug,elbir2022Nov_SPM_beamforming}. This causes the generated beams at different subcarriers split and point to different directions. {At lower frequencies (e.g., up to millimeter-wave~\cite{spatialWideband_Wang2018May}), \textit{beam-squint} is used to describe the same phenomenon, and negative group-delay networks are used to compensate for it \cite{beam_squinting_Mirzaei2015Feb}. Whereas beam-squint causes slight deviations in the beam direction while they still cover the targets with their mainlobes, the generated beams are completely non-overlapping in \textit{beam-split}. The latter is, therefore, a more severe form of the former (Fig.~\ref{fig_MUSIC_spectra}a)}. 	This angular deviation eventually impacts MC, which is direction-dependent because of the directional beampattern~\cite{elbir_directionDependent_MC_Elbir2017Jan}. The existing techniques to mitigate beam-split are mostly hardware-based~\cite{thz_beamSplit}. Specifically, additional hardware components such as time-delayer networks to realize subcarrier-dependent (SD) ABs, they are expensive because each phase shifter is connected to multiple delayer elements. Further, each such element consumes approximately $100$ mW which is more than that of a single phase shifter ($40$ mW) at THz~\cite{elbir2022Aug_THz_ISAC}. 
	

%
	
	Although THz channel estimation~\cite{elbir_THZ_CE_ArrayPerturbation_Elbir2022Aug} and hybrid analog/digital beamforming~\cite{thz_beamSplit,elbir2021JointRadarComm} under beam-split have been explored in prior THz studies, these algorithms did not examine either DoA estimation or MC calibration. The hybrid architectures at mm-Wave~\cite{doaEst_mmWave_AngleDomain_Fan2017Dec,doaEst_mmWave_Zhang2021Oct} and THz~\cite{milliDegree_doa_THz_Chen2021Aug} so far employ orthogonal matching pursuit (OMP)~\cite{doaEst_mmWave_AngleDomain_Fan2017Dec},  maximum likelihood (ML)~\cite{doaEst_mmWave_Zhang2021Oct} and \textit{MU}ltiple \textit{SI}gnal \textit{C}lassification (MUSIC)~\cite{milliDegree_doa_THz_Chen2021Aug,thz_DOA_Nayir2022Oct} for DoA estimation but exclude beam-split and MC. The calibration of beam-split in THz systems is considered for different array geometries, e.g., uniform linear/rectangular array (ULA/URA)~\cite{elbir_THz_CE_BSAOMP_Elbir2023Feb,elbir_THZ_CE_ArrayPerturbation_Elbir2022Aug,elbir_THZ_HB_Aunified_TVT_Elbir2023Apr}. Specifically, the ULA provides a simple design whereas the URA-based THz systems allow a compact and efficient design with array/group of subarrays architecture~\cite{elbir2021JointRadarComm}. While the calibration procedure is the similar, the use of URA introduces the calibration of beam-split in both azimuth and elevation.      
	

	In this work, contrary to previous works, we address the aforementioned shortcomings by considering beam-split as an array imperfection in a similar way as MC has been investigated in the well-studied array calibration theory~\cite{friedlander,elbir_directionDependent_MC_Elbir2017Jan,array_calibration_Viberg2009Jan}. Specifically, we model the beam-split errors as a diagonal matrix, which represents a linear transformation between the nominal and actual steering vectors corrupted by beam-split. Using this transformation and incorporating the signal-noise subspace orthogonality, we develop a \textit{c}alib\textit{r}ated for b\textit{ea}am-split and \textit{M}C MUSIC (CREAM-MUSIC) algorithm to obtain	accurate DoA estimates. In particular, we introduce an alternating approach, wherein the DoAs, beam-split, and MC coefficients are iteratively estimated. For DoA, while the degree of beam-split is proportionally known \textit{a priori}, it depends on the unknown target direction. For example, consider $f_m$ and $f_M$ to be the frequencies for, respectively, the $m$-th and last/highest subcarriers. When $\theta$ is the physical target direction, then the spatial direction corresponding to the $m$-th-subcarrier is shifted by $\frac{f_m}{f_M}\theta$. 	We then construct the CREAM-MUSIC spectra which accounts for this deviation and thereby \textit{ipso facto} mitigates the effect of beam-split (Fig.~\ref{fig_MUSIC_spectra}b-c). 
	\begin{figure}[t]
		\centering
		{\includegraphics[draft=false,width=.7\columnwidth]{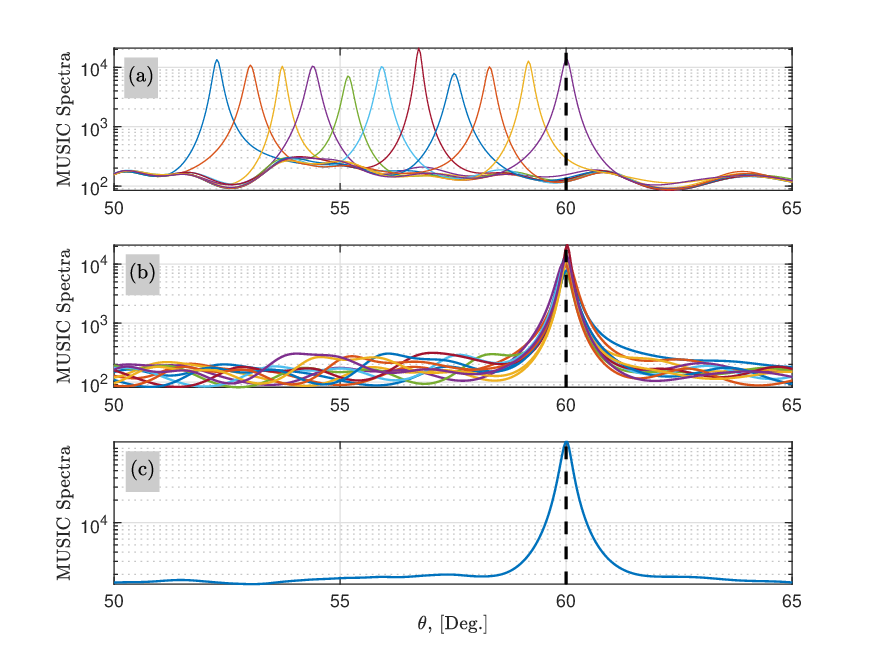} } 
		\caption{The MUSIC spectra of (a) all subcarriers without beam-split compensation, (b) all subcarriers with beam-split compensation, (c) CREAM-MUSIC algorithm for the source location depicted with vertical line. $f_c=300$ GHz, $B = 30$ GHz, $M = 11$, $\tilde{\theta}=60^\circ$, $N=128$ and $N_\mathrm{RF}=8$. 
		}
		\label{fig_MUSIC_spectra}
	\end{figure}

	\section{System Model \& Problem Formulation}
	\label{sec:SignalModel}
	Consider a wideband  THz ultra-massive multiple-input multiple-output uplink scenario, wherein the base-station (BS) employs hybrid analog/digital beamformers with $N$-element  ULA and $N_\mathrm{RF}$ radio-frequency (RF) chains. In  sensing applications, the equivalent signals would be from $K$ targets. Without loss of generality, assume the targets are in the far-field of the BS with the physical DoAs $\theta_k$ for $k\in \{1,\cdots, K\} = \mathcal{K}$, where  $\theta_{k} = \sin \tilde{\theta}_{k}$ ($\tilde{\theta}_{k}\in [-\frac{\pi}{2},\frac{\pi}{2}]$).

	To begin with, we first consider the conventional MC- and  beam-split- free scenario, wherein the $N\times 1$	steering vector corresponding to the $k$-th target is 
	\begin{align}
	\label{steeringVec_nominal}
	{\mathbf{a}}(\theta_{k}) = [1, e^{\mathrm{j} 2\pi \frac{d}{\lambda^\star} \theta_{k} },\cdots, e^{\mathrm{j} 2\pi (N-1)\frac{d}{\lambda^\star} \theta_{k} }]^\textsf{T},
	\end{align}
	where $d$ stands for  antenna spacing and $\lambda^\star  = \frac{\nu}{f_M}$ denotes the wavelength corresponding to the highest subcarrier frequency. Hence, $d$  is  selected as $d = \frac{\nu}{2\max_m f_m} = \frac{\nu}{2f_M}$ in order to avoid spatial aliasing~\cite{wideband_doaEst_Wideband_Friedlander1993Apr}. Here, $\nu$ is speed of light and $f_m = f_c + \frac{B}{M} (m - 1-\frac{M-1}{2})$ is the $m$-th subcarrier frequency ($ m \in  \{1,\cdots, M\}= \mathcal{M}$) with $f_c$ and $B$ being the carrier frequency and total bandwidth, respectively.
	
	In order to sense the targets, the BS employs the sensing signal $\mathbf{s}_m(t_i)\in \mathbb{C}^{N_\mathrm{RF}}$ and the SI precoder $\mathbf{F}\in \mathbb{C}^{N\times N_\mathrm{RF}}$. Then,  the BS activates $N_\mathrm{RF}$ RF chains and  applies $\mathbf{F}\in \mathbb{C}^{N\times N_\mathrm{RF}}$ to the sensing signal $\mathbf{s}_m(t_i)$, and the $N\times 1$ transmit signal becomes $	\mathbf{x}_{m}(t_i) = \mathbf{F}\mathbf{s}_m(t_i) $
	where $i = 1,\cdots, T$ and $T$ is the number of data snapshots.
	
	\textbf{Beam-Split:} 
	During the search phase of the THz radar, the AB is designed with the structure of steering vectors corresponding to the search directions. For an arbitrary  $\mathbf{a}(\theta_k)$  for direction $\theta_k$. Then, the $n$-th element of $\mathbf{a}(\theta_k)$ is defined as $[\mathbf{a}(\theta_k)]_n = 1/\sqrt{N} \exp \big\{\mathrm{j} (n-1) \frac{2\pi d }{\lambda_m} \theta_k   \big\}$, i.e.,
	\begin{align}
[\mathbf{a}(\theta_k)]_n = 1/\sqrt{N} \exp \big\{\mathrm{j} (n-1) \pi \eta_m \theta_k   \big\}, \label{steeringVecSquinted}
	\end{align}
	where $\eta_m = \frac{f_m}{f_M}$. We can see that the direction of  AB designed for $\theta_k$ is split and points to the spatial direction $\bar{\theta}_{k,m} = \eta_m \theta_k$. It is also clear that when $f_m = f_M$, we have  $\bar{\theta}_{k,m}  = \theta_k$, which leads to the definition in (\ref{steeringVec_nominal}).

	

	\begin{lemma}
		\label{lemma1}
		Let $\mathbf{a}(\bar{\theta}_m)$ and $\mathbf{a}(\theta)$ be the beam-split-corrupted and nominal steering vectors for an arbitrary direction $\theta$ and subcarrier $m\in \mathcal{M}$ as defined in (\ref{steeringVecSquinted}) and (\ref{steeringVec_nominal}), respectively. Then, $\mathbf{a}(\bar{\theta}_m)$ achieves the maximum array gain, i.e., $A_G(\theta,m) = \frac{ |\mathbf{a}^\textsf{H}(\theta)\mathbf{a}(\bar{\theta}_m)   |^2  }{N^2} $, if $\bar{\theta}_m = \eta_m \theta $.
	\end{lemma}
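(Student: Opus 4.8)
The plan is to collapse the numerator of $A_G(\theta,m)$ into a single length-$N$ geometric series and then maximise it by a one-line Cauchy--Schwarz (equivalently, Dirichlet-kernel) argument; the only point that needs genuine care is \emph{uniqueness} of the maximiser. \textbf{Step 1 (reduce to a geometric sum).} Substituting the element-wise forms (\ref{steeringVec_nominal}) and (\ref{steeringVecSquinted}) into $\mathbf{a}^\textsf{H}(\theta)\mathbf{a}(\bar{\theta}_m)$, and using that $d=\tfrac{\nu}{2f_M}$ makes $\tfrac{d}{\lambda^\star}=\tfrac12$ and $\tfrac{d}{\lambda_m}=\tfrac{\eta_m}{2}$, the inner product collapses to
\begin{align}
\mathbf{a}^\textsf{H}(\theta)\mathbf{a}(\bar{\theta}_m)=c\sum_{n=0}^{N-1} z^{n},\qquad z:=e^{\mathrm{j}\pi\psi},\quad \psi:=\bar{\theta}_m-\eta_m\theta,
\end{align}
a geometric series in $z$, where $c$ absorbs the element normalisations and $\psi$ is the per-antenna spatial-frequency offset between the nominal steering vector and the subcarrier-$m$, beam-split-shifted response of a source at $\theta$; the factor $\eta_m=f_m/f_M$ appears precisely because (\ref{steeringVecSquinted}) scales that response's spatial frequency by $\eta_m$.

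\textbf{Step 2 (bound and closed form).} By the triangle inequality $\bigl|\sum_{n=0}^{N-1}z^{n}\bigr|\le\sum_{n=0}^{N-1}|z^{n}|=N$, so $A_G(\theta,m)\le|c|^{2}$, with equality \emph{iff} the $N$ unit phasors $z^{n}$ all coincide, i.e. iff $z=1$; equivalently, summing the series yields the normalised Dirichlet kernel $A_G(\theta,m)=|c|^{2}\bigl(\tfrac{\sin(N\pi\psi/2)}{N\sin(\pi\psi/2)}\bigr)^{2}$, whose supremum is reached only when $\psi$ is an even integer. \textbf{Step 3 (identify the unique maximiser).} It remains to show $z=1$ forces $\psi=0$, hence $\bar{\theta}_m=\eta_m\theta$. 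Since $\tilde{\theta}\in[-\tfrac{\pi}{2},\tfrac{\pi}{2}]$ gives $|\theta|=|\sin\tilde{\theta}|\le1$ and $\eta_m\le1$, both $\eta_m\theta$ and the physically meaningful $\bar{\theta}_m$ lie in $[-1,1]$, so $|\psi|\le2$; the half-wavelength-at-$f_M$ spacing $d=\tfrac{\nu}{2f_M}$ is chosen exactly so this phase window fits within one period (no grating lobes), leaving $\psi=0$ as the only admissible even integer apart from the degenerate endfire endpoints $\psi=\pm2$. Hence the maximiser is unique and equals $\bar{\theta}_m=\eta_m\theta$, at which $A_G(\theta,m)$ attains its peak value, as claimed.

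I expect Steps 1 and 2 to be routine. The real obstacle is the bookkeeping in Step 3 --- arguing $z=1\iff\psi=0$ rather than merely $\psi\in2\mathbb{Z}$ --- because this is what rules out the spatial-aliasing ambiguities, and it relies on the deliberate no-aliasing spacing together with $|\theta|\le1$ and $\eta_m\le1$. A secondary nuisance is that (\ref{steeringVec_nominal}) and (\ref{steeringVecSquinted}) carry different element normalisations, so the constant $c$ (hence the exact peak value, $1$ versus $1/N$) must be carried consistently through the division by $N^{2}$.
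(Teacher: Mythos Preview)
Your proposal is correct and follows essentially the same route as the paper: collapse $\mathbf{a}^\textsf{H}(\theta)\mathbf{a}(\bar{\theta}_m)$ to a single length-$N$ geometric sum, rewrite it as the normalised Dirichlet kernel $\bigl|\tfrac{\sin(N\pi\mu_m)}{N\sin(\pi\mu_m)}\bigr|^2$ with $\mu_m=\tfrac12(\bar{\theta}_m-\eta_m\theta)=\tfrac{\psi}{2}$, and read off the maximiser at $\mu_m=0$. In fact you are more careful than the paper, which simply asserts the peak occurs at $\mu_m=0$ without your Step~3 argument excluding the grating-lobe integers and without tracking the normalisation mismatch between (\ref{steeringVec_nominal}) and (\ref{steeringVecSquinted}) that you flag as the constant $c$.
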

	
	\begin{proof}
		
		The array gain varies across the whole bandwidth as $A_G({\theta},{m}) = \frac{|\mathbf{a}^\textsf{H}(\theta)  \mathbf{a}(\bar{\theta}_m)|^2}{N^2}.$ By using (\ref{steeringVec_nominal}) and (\ref{steeringVecSquinted}), $A_G({\theta},{m})$ is 
		\begin{align}
		&A_G({\theta},{m})= \frac{1}{N^2} \left| \sum_{n_1 =1}^{N}  \sum_{n_2=1}^{N} e^{-\mathrm{j} \pi  \left( (n_1-1){\bar{\theta}_m} - (n_2-1)\frac{\lambda_M\theta}{\lambda_m}\right)    }   \right| ^2 \nonumber \\
		&= \left|\sum_{n = 0}^{N-1} \frac{e^{-\mathrm{j}2\pi n d \left( \frac{\bar{\theta}_m}{\lambda_M} - \frac{\theta}{\lambda_m}  \right)     }  }{N} \right|^2 
		=   \left|\sum_{n = 0}^{N_\mathrm{T}-1} \frac{e^{-\mathrm{j}2\pi n d \frac{(f_c \Phi_m - f_m\Phi) }{c_0}     }  }{N} \right|^2 \nonumber\\
		&= \left| \frac{1 - e^{-\mathrm{j}2\pi Nd \frac{(f_M \bar{\theta}_m - f_m\theta)}{\nu}    }}{N (1 - e^{-\mathrm{j}2\pi d\frac{(f_M \bar{\theta}_m - f_m\theta)}{\nu} ) }}   \right|^2 \hspace{-3pt}
		= \left| \frac{\sin (\pi N \mu_m )}{N\sin (\pi \gamma_m )}    \right|^2 \hspace{-3pt}  = |\xi( \mu_m )|^2, \nonumber 
		\end{align}
		where    $\mu_m = d\frac{(f_M \bar{\theta}_m - f_m\theta)}{\nu}   $. The array gain implies that most of the power is focused only on a small portion of the beamspace due to the power-focusing capability of $\xi(\cdot)$, which substantially reduces across the subcarriers as $|f_m - f_M|$ increases. Furthermore, $|\xi( \mu_m )|^2$ gives peak when $\mu_m = 0$, i.e.,  $f_M \bar{\theta}_m - f_m\theta= 0$, which yields $ \bar{\theta}_m = \eta_m \theta$. 		
	\end{proof}
	
	Above lemma indicates that the split direction $\bar{\theta}_m$ should be taken into account to achieve accurate DoA estimation performance. Finally, we define the relationship between the nominal and beam-split-corrupted steering vectors as
	\begin{align}
	\mathbf{a}(\bar{\theta}_{m,k}) = \mathbf{B}_{m,k}\mathbf{a}(\theta_k), \label{beamSquint_steeringVec1}
	\end{align}
	where $\mathbf{B}_{m,k} = \mathrm{diag}\{b_{m,k,1},\cdots, b_{m,k,N} \}$ is an $N\times N$ diagonal matrix with $b_{m,k,n} =\exp\{\mathrm{j}(n-1) \pi \Delta_{m,k}  \}  $, where  $\Delta_{m,k}$ is defined as beam-split, i.e., $\Delta_{m,k} = \bar{\theta}_{m,k} -  \theta_{k}.$
	
	\textbf{Mutual Coupling:} {Experimentally, the MC matrix is found from the inverse of the measurement transformation matrix $\mathbf{Z}_m\in \mathbb{C}^{N\times N}$, which maps the coupled voltages to the uncoupled voltages~\cite{mc_1_Gupta1983Sep,ref8_ULA_MCM,groundCalElbir} as $\mathbf{Z}_m \mathbf{V}_m = \mathbf{U}_m$. Here, $\mathbf{V}_m = \left[\mathbf{v}_m^1,\cdots, \mathbf{v}_m^\Omega  \right]\in \mathbb{C}^{N\times \Omega}$ comprises the coupled measurements collected for $\Omega \geq N$ distinct directions when all the antennas are residing in the array whereas $\mathbf{U}_m = \left[\mathbf{u}_m^1,\cdots, \mathbf{u}_m^\Omega  \right]\in \mathbb{C}^{N\times \Omega}$ is obtained with $N$ antenna elements, each of which is considered one-by-one separately while the remaining $N-1$ antennas are removed. Then, the direction-independent MC matrix is obtained as $\mathbf{C}_m = \mathbf{Z}_m^{-1}$. When the MC is DD, it should be computed for a certain angular sectors~\cite{elbir_directionDependent_MC_Elbir2017Jan}.  }

	The MC matrix of a ULA\footnote{The proposed approach is easily applicable to different array geometries, e.g., UCA/URA. Specifically, the MC matrix has circular-symmetric and block-Toeplitz structure for UCA and URA, respectively~\cite{friedlander}.} is represented by an  $L$-banded Toeplitz matrix~\cite{friedlander,elbir_DDMC_calibration_Elbir2017Oct,doaEst_MCoupling_convergence_Liao2012Jan}. 
	Define $\mathbf{C}_{m}(\theta_k)\in \mathbb{C}^{N\times N}$ as the SD and direction-dependent MC matrix with $L$ MC coefficients, i.e., $\mathbf{C}_{m}(\theta_k) = \mathrm{Toeplitz}\{1, c_{m,k,1}, \cdots, c_{m,k,L}, 0, \cdots, 0 \}$, where $\{c_{m,k,l}\}_{l = 1}^L$ are the distinct MC coefficients. Denote the actual steering vector  by $\tilde{\mathbf{a}}(\bar{\theta}_{m,k})\in \mathbb{C}^{N}$. {\color{black}The steering vector corrupted by both beam-split and MC  becomes
		\begin{align}
		{\color{black}\tilde{\mathbf{a}}(\bar{\theta}_{m,k}) = \mathbf{C}_{m}(\theta_k) \mathbf{a}(\bar{\theta}_{m,k}) =\mathbf{C}_{m}(\theta_k) \mathbf{B}_{m,k}\mathbf{a}(\theta_k), }
		\end{align}}
	for which we define the MC-corrupted steering matrix in a compact for as $\mathbf{D}_m = [\tilde{\mathbf{a}}(\bar{\theta}_{m,1}), \cdots, \tilde{\mathbf{a}}(\bar{\theta}_{m,K})] \in \mathbb{C}^{N\times K}$.

	Our goal is to estimate the DoAs of the targets $\{\theta_k\}_{k =  1}^K$ while accurately compensating for beam-split and MC.

	\section{Proposed Method}


	Define ${\mathbf{X}}_{m} = [\mathbf{x}_{m}(t_1),\cdots, \mathbf{x}_{m}(t_T)]\in \mathbb{C}^{N\times T}$ as the radar probing signal transmitted by the BS for $T$ data snapshots along the fast-time axis~\cite{mimoRadar_WidebandYu2019May}, where $\mathbb{E}\{ {\mathbf{X}}_{m} {\mathbf{X}}_{m}^\textsf{H} \} = \frac{P_\mathrm{r}T}{M N}\mathbf{I}_{N}$, for which $\mathbf{F} \mathbf{F}^\textsf{H} = 1/N$, and  $P_\mathrm{r}$ is the radar transmit power. {\color{black}Limited number of RF chains yield $N_\mathrm{RF}\times 1$ received baseband data vector. When $K \geq N_\mathrm{RF}$, it results in poorer parameter estimation~\cite{wideband_doaEst_Wideband_Friedlander1993Apr,music_Schmidt1986Mar}. To collect the full array data from $N_\mathrm{RF}$ RF chains, we follow a subarrayed approach. That is, the BS activates the antennas in a subarrayed manner to apply the $N\times N_\mathrm{RF}$ analog beamforming matrix $\widetilde{\mathbf{W}} = \left[\overline{\mathbf{W}}_1^\textsf{T},\cdots,\overline{\mathbf{W}}_J^\textsf{T}  \right]^\textsf{T}$. Then, the BS collects the received target echoes for $J = \frac{N}{N_\mathrm{RF}}$ time slots. The target DoAs remain invariant within a time slot but change across different time slots. This is reasonable for the THz system, wherein the symbol time is of the order of picoseconds~\cite{milliDegree_doa_THz_Chen2021Aug,mimoRadar_WidebandYu2019May}. Then, for the $j$-th time slot, the BS applies the combiner matrix $\mathbf{W}_j = \left[\begin{array}{c}
		\mathbf{0}_{jN_\mathrm{RF}\times N_\mathrm{RF}}\\
		\overline{\mathbf{W}}_j\\
		\mathbf{0}_{N-(j+1)N_\mathrm{RF}\times N_\mathrm{RF}} 
		\end{array}\right] \in \mathbb{C}^{N\times N_\mathrm{RF}}$, where $\overline{\mathbf{W}}_j\in \mathbb{C}^{N_\mathrm{RF}\times N_\mathrm{RF}}$ represents the $j$-th block of $\widetilde{\mathbf{W}}$ corresponding to the $j$-th subarray. }  Then, the $N_\mathrm{RF}\times T$ echo signal from the $K$ targets at the $j$-th time slot is
	\begin{align}
	\label{radarReceived}
	{\mathbf{Y}}_{m,j} = \sum_{k = 1}^K \beta_k \mathbf{W}_j^\textsf{H}\tilde{\mathbf{a}}(\bar{\theta}_{m,k}) \left[\tilde{\mathbf{a}}(\bar{\theta}_{m,k})\right]^\textsf{T} {\mathbf{X}}_{m} + \tilde{\mathbf{N}}_{m,j},
	\end{align}
	where $\beta_k\in \mathbb{C}$ denotes the reflection coefficient of the $k$-th target. $\tilde{\mathbf{N}}_{m,j} = \mathbf{W}_j^\textsf{H}\bar{\mathbf{N}}_{m,j}\in \mathbb{C}^{N_\mathrm{RF}\times T}$  is representing the noise term, where $\bar{\mathbf{N}}_{m,j} = [\bar{\mathbf{n}}_{m,j}(t_1),\cdots, \bar{\mathbf{n}}_{m,j}(t_T)]\in \mathbb{C}^{N\times T}$ with $\bar{\mathbf{n}}_{m,j}(t_i)\sim \mathcal{CN}(\mathbf{0},{\sigma}_n^2\mathbf{I}_{N})$. Denote the target steering matrix and reflection coefficients by ${\mathbf{A}}_{m,j} = \mathbf{W}_j^\textsf{H}\mathbf{D}_m\in \mathbb{C}^{N_\mathrm{RF}\times K}$ and $\boldsymbol{\Pi} = \mathrm{diag}\{\beta_1, \cdots, \beta_K \}\in \mathbb{C}^{K\times K}$, then (\ref{radarReceived}) becomes $	{\mathbf{Y}}_{m,j} = {\mathbf{A}}_{m,j}    \boldsymbol{\Pi} {\mathbf{D}}_{m}^\textsf{T}{\mathbf{X}}_{m} + \tilde{\mathbf{N}}_{m,j}.$
	Stacking all $\mathbf{Y}_{m,j}$ into a single $N\times T$ matrix leads to the overall observation matrix $\mathbf{Y}_m\in \mathbb{C}^{N\times T}$ as
	\begin{align}
	\mathbf{Y}_m &= \left[ \mathbf{Y}_{m,1}^\textsf{T}, \cdots, \mathbf{Y}_{m,J}^\textsf{T} \right]^\textsf{T}  = \mathbf{A}_m \boldsymbol{\Pi} \mathbf{D}_m^\textsf{T} \mathbf{X}_m + \tilde{\mathbf{N}}_m,
	\end{align}
	where $\mathbf{A}_m = \left[\mathbf{A}_{m,1}^\textsf{T},\cdots, \mathbf{A}_{m,J}^\textsf{T}  \right]^\textsf{T} = \mathbf{W}^\textsf{H}\mathbf{D}_m \in \mathbb{C}^{N\times K}$,   $\mathbf{W} = \left[{\mathbf{W}}_1,\cdots, {\mathbf{W}}_J  \right]\in \mathbb{C}^{N\times N}$, and $\tilde{\mathbf{N}}_m = \left[\tilde{\mathbf{N}}_{m,1}^\textsf{T},\cdots, \tilde{\mathbf{N}}_{m,J}^\textsf{T}  \right]^\textsf{T}$.
	{\color{black}We now introduce an alternating algorithm, wherein the beam-split-corrected DoA angles and the MC coefficients are estimated iteratively such that MC parameters are kept fixed while estimating the DoA angle or vice versa.  }
	
	\subsection{DoA and Beam-Split Estimation}
	In order to estimate the target directions, we invoke the wideband MUSIC algorithm~\cite{music_Schmidt1986Mar,wideband_doaEst_Wideband_Friedlander1993Apr}. Define $\mathbf{R}_m\in \mathbb{C}^{N\times N}$ as the covariance matrix of ${\mathbf{Y}}_m$, i.e., 
	\begin{align}
	\mathbf{R}_m &= \frac{1}{T}{\mathbf{Y}}_m {\mathbf{Y}}_m^\textsf{H} = \frac{1}{T} {\mathbf{A}}_m \left( \frac{P_\mathrm{r}T}{MN}\widetilde{\boldsymbol{\Pi} }\right) {\mathbf{A}}_m^\textsf{H} +  \frac{1}{T}\tilde{\mathbf{N}}_m\tilde{\mathbf{N}}_m^\textsf{H} \nonumber\\
	& \approxeq \frac{P_\mathrm{r}}{MN} {\mathbf{A}}_m \widetilde{\boldsymbol{\Pi}} {\mathbf{A}}_m^\textsf{H}  +  {\sigma}_n^2  \mathbf{I}_{{N}}, \label{R_m1}
	\end{align}
	where $\tilde{\mathbf{N}}_m\tilde{\mathbf{N}}_m^\textsf{H} \approxeq {\sigma}_n^2 T/N\mathbf{I}_{N} $ and  $\widetilde{\boldsymbol{\Pi} }\in \mathbb{C}^{K\times K} $ is defined as $	\widetilde{\boldsymbol{\Pi} } =  \boldsymbol{\Pi}\mathbf{D}^\textsf{T}\mathbf{D}^*\boldsymbol{\Pi}^*$.  	Then, the eigendecomposition of $\mathbf{R}_m$ yields $	\mathbf{R}_m = \mathbf{U}_m \boldsymbol{\Theta}_m \mathbf{U}_m^\textsf{H},$
	where $\boldsymbol{\Theta}_m\in \mathbb{C}^{N\times N}$ is a diagonal matrix composed of the eigenvalues of $\mathbf{R}_m$ in a descending order, and $\mathbf{U}_m = \left[\mathbf{U}_{m}^\mathrm{S}\hspace{2pt} \mathbf{U}_m^\mathrm{N} \right]\in \mathbb{C}^{N\times N}$ corresponds to the eigenvector matrix; $\mathbf{U}_m^\mathrm{S}\in\mathbb{C}^{N\times K}$ and $\mathbf{U}_m^\mathrm{N}\in \mathbb{C}^{N\times N-K}$ are the signal and noise subspace eigenvector matrices, respectively. The columns of $\mathbf{U}_m^\mathrm{S}$ and  ${\mathbf{A}}_m$ span the same space that is orthogonal to the eigenvectors in $\mathbf{U}_m^\mathrm{N}$ as $\| {\mathbf{U}_m^\mathrm{N}}^\textsf{H}(  \mathbf{W}^\textsf{H} \tilde{\mathbf{a}}(\bar{\theta}_{m,k}) )\|_2^2$, i.e., $	\| {\mathbf{U}_m^\mathrm{N}}^\textsf{H}( \mathbf{W}^\textsf{H} \mathbf{C}_m(\theta_{k}) \mathbf{B}_{m,k}\mathbf{a}(\theta_k) )\|_2^2 = 0,$
	for $k\in \mathcal{K}$ and $m\in \mathcal{M}$~\cite{music_Schmidt1986Mar}. Thus, given the MC matrix $\mathbf{C}_m(\theta)$, the conventional MUSIC spectra is given by 
	\begin{align}
	\label{musicSpectra2}
	P(\theta) = \sum_{m = 1}^M P_{m}(\theta),
	\end{align}
	where $	P_{m}(\theta) $ is the spectrum corresponding to the $m$-th subcarrier as
	$P_m(\theta) = \frac{1}{ f( m,k) }$, where $	f(m,k) = \mathbf{a}^\textsf{H}(\theta)\mathbf{C}_m^\textsf{H}(\theta)\mathbf{W}\mathbf{U}_m^\mathrm{N}{\mathbf{U}_m^\mathrm{N}}^\textsf{H} \mathbf{W}^\textsf{H}\mathbf{C}_m(\theta)\mathbf{a}(\theta).$
	The MUSIC spectra in (\ref{musicSpectra2}) yields $MK$ peaks, which are deviated due to beam-split (see Fig.~\ref{fig_MUSIC_spectra}a) while correct MUSIC spectra should include $K$ peaks which are aligned for $m\in \mathcal{M}$. In other words, beam-split-corrected steering vectors should be used to accurately compute the MUSIC spectra. The CREAM-MUSIC accounts for this deviation by employing \textit{beam-split-aware} steering vectors $\mathbf{e}_m(\theta)  = \mathbf{W}^\textsf{H} \mathbf{C}_m(\theta) \mathbf{a}(\bar{\theta}_m)  \in \mathbb{C}^{N}$. Then, the CREAM-MUSIC spectra is  $	\widetilde{P}(\theta) = \sum_{m = 1}^M \widetilde{P}_{m}(\theta),$ where 
	\begin{align}
	\label{musicSpectraBSA}
	\widetilde{P}_{m}(\theta) = \frac{1}{\mathbf{e}_m^\textsf{H}(\theta)\mathbf{U}_m^\mathrm{N}{\mathbf{U}_m^\mathrm{N}}^\textsf{H}\mathbf{e}_m(\theta) },
	\end{align}
	for which, the $K$ highest peaks of (\ref{musicSpectraBSA}) correspond to the estimated target DoAs $\{\hat{\theta}_k\}_{k = 1}^K$, and the beam-split is computed as $\hat{\Delta}_{m,k} = (\eta_m-1)\hat{\theta}_k$, for $m \in \mathcal{M}$, $k \in \mathcal{K}$. {\color{black}Note that combining the spectra of $M$ subcarriers results in only a single peak-search in place of separately estimating the DoAs for each subcarrier.}
	


	\subsection{Mutual Coupling Estimation}
	Define $\mathbf{c}_{m,k} = [c_{m,k,1},\cdots, c_{m,k,L}]^\textsf{T}$ as the $L\times 1$ vector MC coefficients. Then, we construct the following useful matrix-vector transformation between $\mathbf{c}_{m,k}$ and $\mathbf{C}_m(\theta_k)$, i.e., $	\mathbf{C}_m(\theta_k) = \mathbf{T}_{m,k}\mathbf{c}_{m,k},$
	where $\mathbf{T}_{m,k} = \left[\mathbf{S}_{m,1} \mathbf{a}(\bar{\theta}_{m,k}), \cdots, \mathbf{S}_{m,L}\mathbf{a}(\bar{\theta}_{m,k})  \right]\in \mathbb{C}^{N\times L}$, for which  $\mathbf{S}_{m,l}$ is an ${N\times N}$ matrix, \textcolor{black}{and it is defined for any array geometry} as $	\mathbf{S}_{m,l} = \left\{\begin{array}{ll}1, & \text{if } [\mathbf{C}_m(\theta_k)]_{i,j}= c_{m,k,l}   \\
	0, & \text{otherwise}	 \end{array}\right.$~\cite{elbir_directionDependent_MC_Elbir2017Jan}.
	Given the DoAs $\{\theta_k\}_{k = 1}^K$, we solve the following optimization problem to estimate $\mathbf{c}_{m,k}$, i.e., $	\minimize_{\mathbf{c}_{m,k}  } \hspace{3pt} \sum_{m = 1}^{M} f(m,k),
	\hspace{5pt}\subjectto  c_{m,k,1} = 1, \label{opt4c}$
	which is equivalent to $	\minimize_{\mathbf{c}_{m,k}  } \hspace{3pt} \sum_{m = 1}^{M} \mathbf{c}_{m,k}^\textsf{H} \boldsymbol{\Sigma}_{m,k} \mathbf{c}_{m,k}, \hspace{3pt}\subjectto  \mathbf{v}^\textsf{T}\mathbf{c}_{m,k} = 1, $
	where $\mathbf{v} = [1, 0, \cdots, 0]^\textsf{T}\in \mathbb{C}^{L\times 1}$ and $	\boldsymbol{\Sigma}_{m,k}\in \mathbb{C}^{L\times L}$ is $\boldsymbol{\Sigma}_{m,k} = \mathbf{T}_{m,k}^\textsf{H}\mathbf{W}\mathbf{U}_m^\mathrm{N}{\mathbf{U}_m^\mathrm{N}}^\textsf{H} \mathbf{W}^\textsf{H}\mathbf{T}_{m,k}.$
	Then, the closed-form solution for $\mathbf{c}_{m,k}$ is 
	\begin{align}
	\hat{\mathbf{c}}_{m,k} = \boldsymbol{\Sigma}_{m,k}^{-1} \mathbf{v} \left(\mathbf{v}^\textsf{T}\boldsymbol{\Sigma}_{m,k}^{-1}\mathbf{v} \right)^{-1}. \label{estC}
	\end{align}
	
	{\color{black}In Algorithm~\ref{alg}, we present the proposed alternating approach to effectively estimate the DoAs, beam-split, and MC coefficients.} Specifically, we first partition the angular search space into $S$ sectors as  $\Phi = \cup_{s = 1}^S \Psi_s$, where $\Phi = [-\frac{\pi}{2},\frac{\pi}{2}]$ for a ULA, and  $\Psi_s$ denotes the angular sector, for which the direction-dependent MC matrix $\mathbf{C}_m(\Psi_s)$ is kept fixed~\cite{elbir_directionDependent_MC_Elbir2017Jan}. Then, the estimates of the DoAs and MC coefficients are alternatingly computed until the algorithm converges for a predefined error threshold parameter $\epsilon$. {\color{black}While the alternating algorithm does not guarantee optimality, its convergence has been shown in prior works~\cite{friedlander,doaEst_MCoupling_convergence_Liao2012Jan}. Nevertheless, the proposed approach almost achieves the CRB (see Fig.~\ref{fig_DOA_RMSE_SNR}).    }
	
	{The implementation of CREAM-MUSIC is similar to the other alternating algorithms for DoA and MC estimation and beam-split compensation stage does not impose an additional constraint on the problem. The computational complexity of the proposed approach is similar to the existing techniques~\cite{elbir_DDMC_calibration_Elbir2017Oct,doaEst_MCoupling_convergence_Liao2012Jan,friedlander} except that wideband processing is involved. Therefore, the complexity order is $\mathcal{O}(M[N^3 + KL^3])$ because of eigendecomposition for DoA estimation ($\mathcal{O}(MN^3)$) and computation of the direction-dependent MC coefficients ($\mathcal{O}(MKL^3)$) for $M$ subcarriers.}

	\section{Numerical Experiments}
	We evaluate the performance of  our CREAM-MUSIC approach in comparison with the MUSIC algorithm with no calibration,  only beam-split compensation (BSC), and only MC calibration (MCC), as well as the Cram\'er-Rao bound (CRB)~\cite{elbir_THZ_CE_ArrayPerturbation_Elbir2022Aug}, in terms of root mean-squared-error (RMSE), i.e., $\mathrm{RMSE}_\theta = (\frac{1}{J_TK} \sum_{i=1}^{J_T}\sum_{k\in \mathcal{K}} || \hat{{\theta}}_{i,k}- {{\theta}}_{i,k}||_2^2 )^{1/2}$, where $\hat{{\theta}}_{i,k}$ stands for the estimated DoA  for the $i$-th experiment 	of $J_T= 100$ Monte Carlo trials. The simulation parameters are $f_c=300$ GHz, $B=30$ GHz, $M=64$, $N = 128$, $N_\mathrm{RF}=8$, $T=100$, $K=2$ and $L = 5$~\cite{elbir_directionDependent_MC_Elbir2017Jan,elbir2021JointRadarComm,milliDegree_doa_THz_Chen2021Aug}. Our CREAM-MUSIC method in Algorithm~\ref{alg} is run  approximately  for $\ell = 50$ iterations with $\epsilon = 10^{-4} $.  The DoAs are selected uniform randomly as $\tilde{\theta}_{k}\sim \mathrm{unif} [-\frac{\pi}{2},\frac{\pi}{2}]$. The AB matrix is modeled as $[\mathbf{W}]_{i,j} = \frac{1}{\sqrt{N}}e^{\mathrm{j}{\psi}}$, where ${\psi} \sim \text{unif}[-1,1]$ for $i = 1,\cdots, N$ and $j = 1,\cdots, N_\mathrm{RF}$. {\color{black}The direction-dependent MC coefficient vectors are selected as $\mathbf{c}_{m,1} = [ 0.85e^{\mathrm{j} \varphi_{m,1,1}}, 0.8e^{\mathrm{j} \varphi_{m,1,2}}, 0.4e^{\mathrm{j} \varphi_{m,1,3}}, 0.2e^{\mathrm{j} \varphi_{m,1,4}} ]^\textsf{T}$ and $\mathbf{c}_{m,2} = [ 0.9e^{\mathrm{j} \varphi_{m,2,1}}, 0.75e^{\mathrm{j} \varphi_{m,2,2}}, 0.45e^{\mathrm{j} \varphi_{m,2,3}}, 0.25e^{\mathrm{j} \varphi_{m,2,4}} ]^\textsf{T}$, where $\varphi_{k,l}\sim \mathrm{unif}[-\pi, \pi]$ for $m\in \mathcal{M}$ and  $l = 1,\cdots,L$.}

	\begin{algorithm}[t]
		\footnotesize
		\begin{algorithmic}[1] 
			\caption{ \bf CREAM-MUSIC}
			\Statex {\textbf{Input:}  $\mathbf{Y}_m$, $\mathbf{W}$, $K$, $S$, $\Phi$, $\epsilon$, $\eta_m$.} \label{alg}
			\State \textbf{Initialize:} $\ell=1$, $\mathbf{C}_m^{\ell}(\Psi_s) = \mathbf{I}_N$, $m\in \mathcal{M}$, $k\in \mathcal{K}$. 
			\State \textbf{while} not terminated \textbf{do}
			\State  \textbf{for} $m\in\mathcal{M}$ 
			\State \indent  Compute $\mathbf{R}_m$ and $\mathbf{U}_m^\mathrm{N}$ from (\ref{R_m1}).
			\State \indent  \textbf{for} $s =1, \cdots, S $ \textbf{do}
			\State \indent \indent   Compute $\widetilde{P}_{m}^{\ell}(\Psi_s)$ from (\ref{musicSpectraBSA}) using $\mathbf{C}_m^{\ell}(\Psi_s)$.
			\State \indent \textbf{end}
			\State \indent Construct $\widetilde{P}_m^{\ell} (\Phi) =  \cup_{s = 1}^S \widetilde{P}_m^{\ell}(\Psi_s)$.
			\State \textbf{end}  
			\State $\widetilde{P}^{\ell}(\Phi)  \gets \sum_{m = 1}^M \widetilde{P}_m^{\ell}(\Phi)$.
			\State Find $\{\hat{\theta}_k^{\ell}\}_{k =1}^{K}$ from the $K$ highest peaks of $\widetilde{P}^{\ell}(\Phi)$.
			\State \textbf{for} $k\in \mathcal{K}$, $m\in \mathcal{M}$
			\State \indent  $\hat{\Delta}_{m,k}^{\ell} \gets (\eta_m -1) \hat{\theta}_k^{\ell}$. 
			\State \textbf{end}
			\State Solve (\ref{estC}) for $\hat{\mathbf{C}}_{m}^{\ell}(\theta_k)$ by using $\mathbf{a}(\hat{\theta}_k^{\ell})$, $m \in \mathcal{M}$, $k\in \mathcal{K}$.
			\State \textbf{if} $  \sum_{k\in \mathcal{K}} |\hat{\theta}_k^{\ell} - \hat{\theta}_k^{\ell-1}  | \leq \epsilon $ \textbf{then} terminate  \textbf{end}
			\State $i \gets i + 1$
			\State \textbf{end}
			\Statex {\textbf{Return:} $\hat{\theta}_k$, $\hat{\Delta}_{m,k}$ and $\hat{\mathbf{C}}_m(\theta_{k})$ for $m\in \mathcal{M}$, $k \in \mathcal{K}$.}
		\end{algorithmic} 
	\end{algorithm}
	
	\begin{figure}[h]
		\centering
		{\includegraphics[draft=false,width=.95\columnwidth]{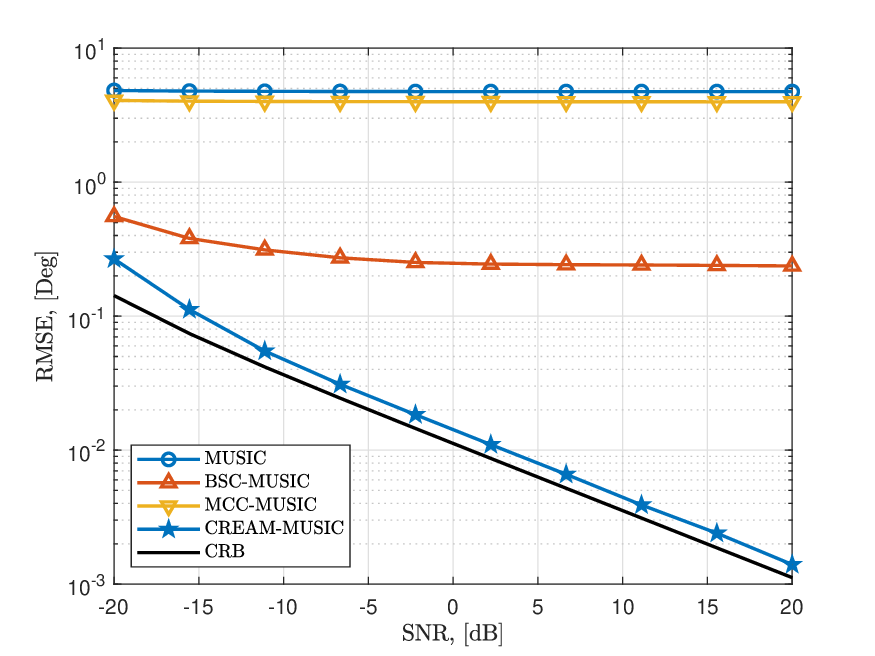} } 
		\caption{ DoA estimation RMSE vs. SNR. 
		}
		\label{fig_DOA_RMSE_SNR}
	\end{figure}

	In Fig.~\ref{fig_DOA_RMSE_SNR}, we present the DoA estimation RMSE with respect to signal-to-noise-ratio (SNR), {\color{black}which is defined as $\mathrm{SNR} = 10\log_{10}(\frac{\rho}{\sigma_n^2})$, where $\rho = \frac{P_\mathrm{r}}{MN^2} = 1$}. As it is seen, both MUSIC and MCC-MUSIC have poor performance because they do not take into account the effect of beam-split. In contrast, BSC-MUSIC exhibits lower RMSE than that of MCC-MUSIC. {\color{black}Specifically, DoA estimation error due to beam-split and MC is approximately $4^\circ$ and $0.25^\circ$, respectively.} The significance of the DoA error due to beam-split is directly related to $\eta_m$ (see Fig.\ref{fig_MUSIC_spectra}a), which causes deviations in the steering vector model (see (\ref{steeringVecSquinted})).   This clearly shows the importance of beam-split compensation. In high SNR, the performance of BSC-MUSIC maxes out due to lose of precision, while the proposed CREAM-MUSIC approach attains the CRB very closely and outperforms the  remaining methods yielding poor  precision.
	

	%
	%
	%


	\section{Summary}
	We examined the THz DoA estimation problem in the presence of beam-split and MC. While the latter has a marginal impact   ($\sim 0.25^\circ$) on DoA estimation, the former causes significant errors in the array gain and be severe ($\sim 4^\circ$). We showed that the proposed CREAM-MUSIC approach can effectively compensate both DoA errors due to beam-split and MC. Furthermore, the proposed method does not require additional hardware components, e.g., time-delayer networks for beam-split calibration. 
	

	%
	%
	%

	\newpage
	\balance
	\bibliographystyle{IEEEtran}
	\bibliography{references_122}

	%
	%
	%
	%
	%
	%
	%
	%
	

	%

\end{document}